

\documentclass[preprint,12pt]{elsarticle}



\usepackage{amssymb,amsmath}
\usepackage{multirow}
\usepackage{algorithm}
\usepackage{algorithmic}
\usepackage{color}
\usepackage{graphicx}


\newtheorem{theorem}{Theorem}
\newtheorem{lemma}{Lemma}

\newdefinition{rmk}{Remark}
\newproof{proof}{Proof}
\newcommand{\be}{\begin{equation}}
\newcommand{\ee}{\end{equation}}
\def\bea{\begin{eqnarray}}
\def\eea{\end{eqnarray}}
\def\nn{\nonumber}

\newcommand{\bbmat}{\left\{ \begin{array}}
\newcommand{\ebmat}{\end{array} \right. }

\def\endzm{\hfill $\Box$ \bigskip}

\def\b0{\boldsymbol{\rm 0}}
\def\bo{\boldsymbol{o}}

\def\bs{\boldsymbol{s}}

\def\bx{\boldsymbol{x}}

\def\by{\boldsymbol{y}}
\def\bq{\boldsymbol{q}}

\def\bI{\boldsymbol{I}}

\def\tM{\textbf{M$k$K}}


\begin{document}

\begin{frontmatter}



\title{Approximation algorithms for $k$-submodular maximization subject to a knapsack constraint}

\author[1]{Hao Xiao}
\author[1]{Qian Liu}
\author[1]{Yang Zhou}
\author[1]{Min Li\corref{cor1}}

\cortext[cor1]{Email: liminemily@sdnu.edu.cn}

\address[1]{School of Mathematics and Statistics, Shandong Normal University, P.R. China.}

\begin{abstract}
In this paper, we study the problem of maximizing $k$-submodular functions subject to a knapsack constraint.
For monotone objective functions, we present a $\frac{1}{2}(1-e^{-2}) \approx 0.432$ greedy approximation algorithm.
For the non-monotone case, we are the first to consider the knapsack problem and provide a greedy-type combinatorial algorithm with approximation ratio $\frac{1}{3}(1-e^{-3}) \approx 0.317$.
\end{abstract}






\end{frontmatter}







\section{Introduction}
For a positive integer $k$, denote $[k]=\{1,2,\ldots,k\}$. Define the following set composing of the $k$-tuples of disjoint subsets in a finite set $E$:
\[
(k+1)^E:=\{(X_1,X_2,\ldots,X_k) | X_i\subseteq E, \forall i\in [k]; X_i\cap X_j=\varnothing, \forall i,j\in [k], i\neq j\}.
\]
A function $f:(k+1)^E\rightarrow \mathbb{R}_+$ is \emph{$k$-submodular} if for any $\bx=(X_1,\dots, X_k)\in (k+1)^E$ and $\by=(Y_1,\dots, Y_k)\in (k+1)^E$, we have
\[ f(\bx)+f(\by) \ \ge \ f(\bx\sqcup\by)+f(\bx\sqcap\by), \]
where
\begin{eqnarray*}
\bx\sqcup\by&=&\bigg(X_1\cup Y_1\setminus(\bigcup_{i\ne 1}(X_i\cup Y_i)), \dots,X_k\cup Y_k\setminus(\bigcup_{i\ne k}(X_i\cup Y_i))\bigg);\\
\bx\sqcap\by&=&\bigg(X_1\cap Y_1, \dots, X_k\cap Y_k\bigg).
\end{eqnarray*}
Evidently,  a $1$-submodular function is just the standard submodular function and hence $k$-submodularity generalizes submodularity.


We introduce a partial order $\preceq$ on $(k+1)^E$: $\forall \bx, \by\in (k+1)^E$, $\bx \preceq \by$ if $X_i\subseteq Y_i$, for each $i\in [k]$. Then a function $f$ is \emph{monotone} if for any $\bx \preceq \by$, we have
$f(\bx)\leq f(\by)$. Moreover, we use $P(\bx)=\cup_{i\in [k]}X_i$ to denote the support set of $\bx$.

As a generalization of submodular function, $k$-submodular functions have many applications~\cite{EN2022,AV,TN}.
For example, in the sensor placement problem, assuming that there are $k$ different types of sensors as candidates, the location of the sensors as well as their types should be considered when we make decisions.
This problem can then be formulated as a $k$-submodular maximization model.
Moreover, if there is budget limitation, this corresponds to a knapsack constraint.

\textbf{Our contributions.}
In this paper, we present approximation algorithms for the problem of maximizing a $k$-submodular function under a knapsack constraint.
For each element $e\in E$, there is a nonnegative integer $c_e$ denoting its cost; and for each subset $S\subseteq E$, let $c(S)=\sum_{e\in S}c_e$ denote the total cost of $S$.
Given an integer budget $L$, the problem of maximizing a $k$-submodular function $f$ under a knapsack constraint (denoted by \tM) is:
\[
\max_{\bx\in(k+1)^E} f(\bx) \quad {\rm s. t.} \ c(P(\bx))\le L.
\]

When the objective function $f$ is monotone, a deterministic  $\frac{1}{2}(1-e^{-1})$-approximation algorithm was proposed~\cite{TWC} and a randomized algorithm with approximation ratio  $(\frac{1}{2}-\varepsilon)$ was designed by introducing a continuous greedy technique~\cite{WZ2017}. In this monotone case,
we improve the approximation ratio to $\frac{1}{2}(1-e^{-2})$ based on the greedy technique (this technique has been used often in the maximization of submodular and $k$-submodular functions, see, e.g., ~\cite{M,TWC}).
When the function $f$ is non-monotone, we firstly prove that the greedy-based algorithm is $\frac{1}{3}(1-e^{-3})$-approximation.

\textbf{Related work.}
It is NP-hard to maximize a monotone $k$-submodular function even without constraint, for which case there exists a deterministic performance guarantee $\frac{1}{2}$~\cite{JS} as well as a randomized approximation ratio $\frac{k}{2k-1}$~\cite{ITY}. For maximizing a non-monotone objective function without constraint, there exists a deterministic algorithm and a randomized algorithm with approximation ratios $\frac{1}{3}$ and $\max\{\frac{1}{3}, \frac{1}{1+a}\}$ respectively, where $a=\max\left\{1, \sqrt{\frac{k-1}{4}}\right\}$~\cite{JS}. Recently, the randomized result was improved to $\frac{1}{2}$~\cite{ITY} and further to $\frac{k^2+1}{2k^2+1}$~\cite{H}.

For constrained $k$-submodular function maximization problem,  two types of size constraints have been investigated in the literature.
The first is the cardinality constraint where a solution is feasible if the number of elements in its support set is bounded by some given value $B$ from above (denoted by $|P(\bx)|\leq B$). When the objective function is monotone with a cardinality constraint, there exists a $\frac{1}{2}$-approximation algorithm~\cite{NY}. For the non-monotone case, there is an approximation ratio $\frac{1}{3}$~\cite{NT}.
The second size constraint studied is the individual size, where there are $k$ cardinality constraints on the output solution $\bx=(X_1,X_2,\ldots,X_k)$, i.e., $|X_i|\leq B_i$ for some given values $B_i$, $\forall i\in [k]$. For a monotone $k$-submodular function, there is a $\frac{1}{3}$-approximation algorithm~\cite{NY} under this constraint type.

For maximizing a monotone $k$-submodular function under a matroid constraint, there is a $\frac{1}{2}$-approximation algorithm~\cite{RY2020,S}. When the objective function is non-monotone, a $\frac{1}{3}$-approximation algorithm is presented~\cite{SLL2022}.
More results on $k$-submodular functions can be found
in~\cite{CQDe,SGW,XLZL2022,YLZL2022,YLZL2023a,YLZL2023b,ZCLL}.

\section{Preliminaries}\label{pre}
For any $e\in E$ and $\bx=(X_1,\dots, X_k)\in (k+1)^E$,
let $\bx_e$ denote the index of the subset in $\bx$ containing $e$ and set $\bx_e=0$ if $e$ does not belong to any subset of $\bx$. That is,
\[
\bx_e=\bbmat{rll}
&i, \;\;& {\rm if}\;  e \in X_i, \\ [1ex]
&0, \; \;& {\rm if}\;  e \notin P(\bx).
\ebmat
\]

The special $k$-tuple $\bI_{[e, i]}=\bigg(\varnothing, \ldots, \underbrace{\{e\}}\limits_{i{\rm-th}}, \dots,\varnothing\bigg)\in (k+1)^E$ contains only one item $e$ in its $i$-th subset.
Taking $e\in E\setminus P(\bx)$ and $i\in [k]$, we use $\Delta_{e, i}f(\bx)=f(\bx \sqcup\bI_{[e,i]})-f(\bx)$ to denote the marginal benefit or gain of adding $e$ to the $i$-th subset of $\bx$. $f$ is \emph{orthant submodular} if it satisfies diminishing marginal benefit, i.e., $\bx\preceq\by$ implies $\Delta_{e, i}f(\bx)\ge\Delta_{e, i}f(\by)$ for any  $e\in E\setminus P(\by)$ and $i\in [k]$. 
The $k$-submodular function $f$ is \emph{pairwise monotone} if $\Delta_{e, i}f(\bx)+\Delta_{e, j}f(\bx)\ge0$ for any $\bx\in(k+1)^E$, $e\in E\setminus P(\bx)$, and distinct $i, j \in [k]$.

The following lemmas are some preliminary results about $k$-submodular functions.
\begin{lemma}\label{mon-ksub}
(\cite{JS}) A function is $k$-submodular if and only if it is orthant submodular and pairwise monotone.
\end{lemma}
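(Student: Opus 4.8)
The plan is to prove Lemma~\ref{mon-ksub}, the characterization that a function $f:(k+1)^E\to\mathbb{R}_+$ is $k$-submodular if and only if it is both orthant submodular and pairwise monotone. This is an iff statement, so I would split the work into two implications. The forward direction (``$k$-submodular $\Rightarrow$ orthant submodular and pairwise monotone'') should follow by specializing the defining inequality $f(\bx)+f(\by)\ge f(\bx\sqcup\by)+f(\bx\sqcap\by)$ to carefully chosen pairs $\bx,\by$. The reverse direction is the harder, more structural part.

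\medskip

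\emph{Forward direction.} To extract orthant submodularity, I would fix $\bx\preceq\by$ with $\bx=(X_1,\dots,X_k)$ and $\by=(Y_1,\dots,Y_k)$, pick $e\in E\setminus P(\by)$ and $i\in[k]$, and apply the $k$-submodular inequality to the pair $\by$ and $\bx\sqcup\bI_{[e,i]}$. The point is to compute the join and meet of these two tuples: since $e\notin P(\by)$, the element $e$ lands in coordinate $i$ of the join, and the meet returns $\bx$ in the $e$-free coordinates; the net effect should rearrange the defining inequality into $\Delta_{e,i}f(\bx)\ge\Delta_{e,i}f(\by)$, which is exactly diminishing returns. For pairwise monotonicity, I would fix $\bx$, an element $e\notin P(\bx)$, and distinct $i,j$, and apply the inequality to $\bx\sqcup\bI_{[e,i]}$ and $\bx\sqcup\bI_{[e,j]}$. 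Here the two tuples assign $e$ to conflicting coordinates, so in the join $e$ is removed from both coordinate $i$ and coordinate $j$ (it belongs to $X_i\cup X_j$ in both but to the complementary union as well), forcing the join down to $\bx$, while the meet is also $\bx$; the inequality then collapses to $\Delta_{e,i}f(\bx)+\Delta_{e,j}f(\bx)\ge 0$.

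\medskip

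\emph{Reverse direction.} Assuming $f$ is orthant submodular and pairwise monotone, I must recover the full inequality $f(\bx)+f(\by)\ge f(\bx\sqcup\by)+f(\bx\sqcap\by)$ for arbitrary $\bx,\by\in(k+1)^E$. The natural strategy is an induction (or a telescoping argument) on the symmetric difference between $\bx$ and $\by$, adding or moving the disagreeing elements one at a time and accounting for the change via the marginal gains $\Delta_{e,i}f$. Each single-element modification should be controlled by one application of orthant submodularity when the element sits in the same coordinate in both tuples, and by one application of pairwise monotonicity when $\bx$ and $\by$ disagree on which coordinate holds the element (so that the element is expelled in the join). The bookkeeping is to show that summing these local inequalities along a path from $(\bx,\by)$ to $(\bx\sqcup\by,\bx\sqcap\by)$ yields the global inequality, the two endpoints being the ``extreme'' configurations where all conflicts have been resolved.

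\medskip

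\emph{Main obstacle.} The genuinely delicate part is the reverse direction's inductive step, specifically handling elements where $\bx$ and $\by$ disagree on the coordinate. For such an element, moving it toward the join/meet configuration changes two coordinates at once, and it is exactly here that pairwise monotonicity (rather than orthant submodularity alone) must be invoked; getting the signs and the reference tuples right so that the local inequalities assemble into the target inequality is where the real care is needed. I would organize the argument to first reduce to the case where $\bx$ and $\by$ share the same support by a monotonicity-style preprocessing, and then dispose of the conflicting-coordinate elements, so that orthant submodularity and pairwise monotonicity are each applied in a clean, isolated context. Since the statement is cited from~\cite{JS}, I would present this reverse direction at the level of the key reductions rather than grinding through every case.
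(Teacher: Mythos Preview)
The paper does not supply its own proof of Lemma~\ref{mon-ksub}; it simply quotes the result from~\cite{JS} as a preliminary fact, so there is nothing in the paper to compare your argument against. Your proposal is nonetheless correct and follows essentially the same line as the original proof in~\cite{JS}: the forward direction is obtained exactly as you describe, by specializing the defining inequality to the pairs $(\by,\,\bx\sqcup\bI_{[e,i]})$ for orthant submodularity and $(\bx\sqcup\bI_{[e,i]},\,\bx\sqcup\bI_{[e,j]})$ for pairwise monotonicity, and the reverse direction is handled by a telescoping/induction over the elements on which $\bx$ and $\by$ differ, using orthant submodularity for elements present in at most one of $\bx,\by$ or assigned to the same coordinate, and pairwise monotonicity for elements assigned to conflicting coordinates. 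One small remark on your reverse-direction outline: the ``monotonicity-style preprocessing'' to equalize supports is not actually available (the function need not be monotone), so you should instead treat the three element types---$e\in P(\bx)\setminus P(\by)$, $e\in P(\by)\setminus P(\bx)$, and $e$ in both with conflicting coordinates---directly in the telescoping, which is how~\cite{JS} proceeds.
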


\begin{lemma}\label{lem-ksub}
(\cite{TWC})
Let $f : (k+1)^E \rightarrow \mathbb{R}_+$ be a $k$-submodular function. For any $\bx,\by\in (k+1)^E$ with $\bx\preceq \by$, we have
\[
f(\by)-f(\bx)\leq \sum_{e\in P(\by)\setminus P(\bx)}\Delta_{e,\by_e}f(\bx).
\]
\end{lemma}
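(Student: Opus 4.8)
The plan is to build a monotone chain from $\bx$ up to $\by$ that inserts the elements of $P(\by)\setminus P(\bx)$ one at a time, each into the coordinate it occupies in $\by$, and then to telescope the difference $f(\by)-f(\bx)$ and bound each step using diminishing marginal returns. First I would record the key structural fact that makes the chain close up at $\by$: since $\bx\preceq\by$ means $X_i\subseteq Y_i$ for every $i\in[k]$, any $e\in P(\bx)$ lies in $X_{\bx_e}\subseteq Y_{\bx_e}$, so $\by_e=\bx_e$. That is, $\bx$ and $\by$ assign the same coordinate to every shared element, and the only discrepancy between them is that $\by$ additionally places each $e\in P(\by)\setminus P(\bx)$ into its coordinate $\by_e$.

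Next I would enumerate $P(\by)\setminus P(\bx)=\{e_1,\dots,e_m\}$ in an arbitrary order and define the chain $\bz^{(0)}=\bx$ and $\bz^{(j)}=\bz^{(j-1)}\sqcup \bI_{[e_j,\by_{e_j}]}$ for $j=1,\dots,m$. Because $e_j\notin P(\bz^{(j-1)})$, each $\sqcup$ simply adjoins $e_j$ to coordinate $\by_{e_j}$, so $\bz^{(0)}\preceq\bz^{(1)}\preceq\cdots\preceq\bz^{(m)}$, and by the structural fact above $\bz^{(m)}=\by$. Telescoping and using the definition of the marginal gain then gives
\[
f(\by)-f(\bx)=\sum_{j=1}^m\big(f(\bz^{(j)})-f(\bz^{(j-1)})\big)=\sum_{j=1}^m\Delta_{e_j,\by_{e_j}}f(\bz^{(j-1)}).
\]

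Finally I would invoke Lemma~\ref{mon-ksub}, by which $f$ is orthant submodular: since $\bx=\bz^{(0)}\preceq\bz^{(j-1)}$ and $e_j\notin P(\bz^{(j-1)})$, diminishing marginal returns yields $\Delta_{e_j,\by_{e_j}}f(\bz^{(j-1)})\le\Delta_{e_j,\by_{e_j}}f(\bx)$ for each $j$. Summing these inequalities and reindexing by $e\in P(\by)\setminus P(\bx)$ produces the claimed bound. I expect the only delicate point to be the bookkeeping in the first two steps, namely verifying that the partial insertions genuinely telescope to $\by$; this hinges on $\by_e=\bx_e$ on the overlap and on each $e_j$ being fresh at stage $j$, so that the orthant-submodularity hypothesis $e_j\notin P(\bz^{(j-1)})$ is met. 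The final inequality is then an immediate termwise application of diminishing returns.
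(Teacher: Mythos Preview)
Your argument is correct: the telescoping chain from $\bx$ to $\by$ together with orthant submodularity (via Lemma~\ref{mon-ksub}) is exactly the standard way to establish this inequality, and your bookkeeping on $\by_e=\bx_e$ for $e\in P(\bx)$ and $e_j\notin P(\bz^{(j-1)})$ is accurate. The paper itself does not supply a proof but simply quotes the lemma from~\cite{TWC}, where the same telescoping-plus-diminishing-returns argument is used, so there is nothing further to compare.
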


%

In this paper, the following lemma also plays an important role.
\begin{lemma}\label{lemA111}(\cite{L})
If $A$ and $B$ are  arbitrary positive integers, $\rho_1>0$, and
$\rho_2,\ldots,\rho_A$ are all arbitrary nonnegative real numbers, then
\[
\frac{\sum_{i=1}^{A}\rho_i}{\min_{s=1, \dots, A}(\sum_{i=1}^{s-1}\rho_i+B\rho_s)}\ge 1-\bigg( 1-\frac{1}{B}\bigg) ^A\geq 1-e^{-A/B}.
\]
\end{lemma}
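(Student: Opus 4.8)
The plan is to reduce the first inequality to a simple linear recurrence on the partial sums. Write $P_s = \sum_{i=1}^s \rho_i$ for $s = 0, 1, \ldots, A$ (so that $P_0 = 0$ and $P_A$ is the numerator $\sum_{i=1}^A \rho_i$), and let $m = \min_{1 \le s \le A}\left(\sum_{i=1}^{s-1}\rho_i + B\rho_s\right)$ denote the denominator. First I would observe that $m > 0$: the term for $s = 1$ equals $B\rho_1 > 0$, while every term for $s \ge 2$ is at least $\rho_1 > 0$, so the quotient is well-defined and it suffices to prove $P_A \ge \left(1 - (1-1/B)^A\right) m$.

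The key algebraic step is to rewrite the $s$-th term inside the minimum in terms of consecutive partial sums:
\[
\sum_{i=1}^{s-1}\rho_i + B\rho_s = P_{s-1} + B(P_s - P_{s-1}) = B P_s - (B-1) P_{s-1}.
\]
Since $m$ is the minimum, $B P_s - (B-1)P_{s-1} \ge m$ holds for every $s$, which rearranges to the recurrence
\[
P_s \ge \frac{m}{B} + \left(1 - \frac{1}{B}\right) P_{s-1}, \qquad s = 1, \ldots, A.
\]

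Writing $\beta = 1 - \frac{1}{B}$, I would then unfold this recurrence from $P_0 = 0$ by induction on $s$ to obtain $P_A \ge \frac{m}{B}\sum_{j=0}^{A-1}\beta^j = \frac{m}{B}\cdot\frac{1 - \beta^A}{1 - \beta}$. Because $1 - \beta = \frac{1}{B}$, the geometric sum collapses to $m\left(1 - \beta^A\right) = m\left(1 - (1 - 1/B)^A\right)$, which is exactly the first inequality after dividing by $m$. The second inequality is immediate from the elementary bound $1 - \frac{1}{B} \le e^{-1/B}$ (i.e.\ $1 + x \le e^x$ with $x = -1/B$), raised to the $A$-th power.

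I do not expect a genuine obstacle here; the only thing to get right is the reformulation $B P_s - (B-1)P_{s-1}$, after which everything is a one-line induction together with a geometric-series evaluation. Care is needed only to confirm $m > 0$ so that dividing by $m$ is valid, and this relies precisely on the hypotheses $\rho_1 > 0$ and $B \ge 1$.
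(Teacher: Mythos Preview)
Your proof is correct: rewriting the $s$-th term as $BP_s-(B-1)P_{s-1}$, unfolding the resulting linear recurrence with $P_0=0$, and summing the geometric series gives exactly the first inequality, and $1-1/B\le e^{-1/B}$ handles the second. The paper itself does not prove this lemma; it simply quotes it from Wolsey~\cite{L}, so there is no in-paper argument to compare against.
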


\section{A deterministic algorithm for \tM}\label{alg-ana}
In this section, we propose a deterministic approximation algorithm for \tM. The main idea of the algorithm is to enumerate all size-$w$ solutions and extend each of them greedily until the budget runs out. The greedy procedure in each iteration is to maximize the marginal density (i.e., the marginal gain divided by the element cost). The best solution is returned as outcome.
When the input function $f$ is monotone, this greedy method has been presented for maximizing submodular functions~\cite{M} and $k$-submodular functions~\cite{TWC}. Our main contribution is to provide an improved analysis of the approximation ratio from $\frac{1}{2}(1-e^{-1})$ to $\frac{1}{2}(1-e^{-2})$.
Moreover, we further show that this algorithm is $\frac13(1-e^{-3})$-approximation when $f$ is non-monotone.
There may exist some elements with ``big" cost and ``high" value so that their marginal density may be not the highest.
To prevent the algorithm from missing such elements, enumeration of $w$ elements is required before executing the greedy scheme, which can be calculated through analysis.
\begin{algorithm}[htb]
    \caption{Greedy algorithm for \tM }\label{greedy}
    \textbf{Input:} A set $E$ with cost $c$, a $k$-submodular function $f$, an integer $w$ (if $f$ is a monotone $k$-submodular function, $w = 4$; otherwise, $w=7$) and a budget $L$. \\
    \textbf{Output:} $\bs\in (k+1)^E$ with $c(P(\bs))\le L$.

    \begin{algorithmic}[1]
        \STATE $\bs\gets\arg\max f(\bs_0)$ subject to $|P(\bs_0)|=w-1$ and $c(P(\bs_0))\le L$
        \FOR{every $\bs^0 \ (|P(\bs^0)|=w$ and $c(P(\bs^0))\le L)$}
            \STATE $E^0:=E\setminus P(\bs^0)$ and $j:=1$
            \WHILE {$|E^{j-1}|\ne0$}
                \STATE $[e^j, i^j]\gets \arg\max_{e\in E^{j-1}, i\in[k]}\frac{\Delta_{e, i}f(\bs^{j-1})}{c_e}$
                \IF{$c(P(\bs^{j-1}))+c_{e^j}\le L$}
                    \STATE $\bs^j:=\bs^{j-1}\sqcup\bI_{[e^j, i^j]}$
                \ELSE
                    \STATE $\bs^j:=\bs^{j-1}$
                \ENDIF
                \STATE $E^j:=E^{j-1}\setminus\{e^j\}$ and $j:=j+1$
            \ENDWHILE
            \STATE $\bs\gets \arg\max\{f(\bs), f(\bs^j)\}$
        \ENDFOR
        \STATE \textbf{return } $\bs$
    \end{algorithmic}
\end{algorithm}

Let $\bo$ be an optimal solution with $|P(\bo)|=r$.
To analyze the approximation guarantee of the algorithm, we assume that $r\geq w$ because, if $r \le w-1$, the algorithm finds the optimal solution in the first step, where $w$ is the undetermined integer.
Then we order $P(\bo)=\{e_1,\ldots,e_{r}\}$ by defining a new sequence of $k$-tuples $\{\bq^j\}_{j=0}^{r}$ with $\bq^0=\varnothing$ and $P(\bq^{r})=P(\bo)$ according to the maximal gain:
\[
f(\bq^j)=\max_{e\in P(\bo)\setminus P(\bq^{j-1}),i\in [k]} f(\bq^{j-1} \sqcup \bI_{[e,i]}), \, j=1, \dots, r.
\]

Moreover, we can construct another sequence of $k$-tuples $\{\bar{\bo}^j\}_{j=0}^{r}$ with $\bar{\bo}^0=\bo$ and
$\bar{\bo}^j$ consists of the items in $P(\bo)$, where the index of items in $P(\bq^j)$ aligns with $\bq^j$, and the index of other items aligns with $\bo$, see, Figure~\ref{fig:Figure_1}.

That is, for $j = 1, \dots, r$, we define
\bea
\bar{\bo}^j &=& (\bar{\bo}^{j-1} \sqcup \bq^j) \sqcup \bq^j, \nn
\eea
then by using the unconstrained greedy analysis process in~\cite{JS}, we can get
\bea
f(\bar{\bo}^{j-1}) - f(\bar{\bo}^j) &\le& f(\bq^j) - f(\bq^{j-1})\label{without-constraint-iteration}\,\, (f \, {\rm is} \, {\rm monotone}),\\
f(\bar{\bo}^{j-1}) - f(\bar{\bo}^j) &\le& 2[f(\bq^j) - f(\bq^{j-1})]\label{without-constraint-iteration-non}\,\, ( f \, {\rm is} \, \text{non-monotone}).
\eea

\begin{figure}[htb]
    \centering
    \includegraphics[width=12cm,height=10cm,angle=0]{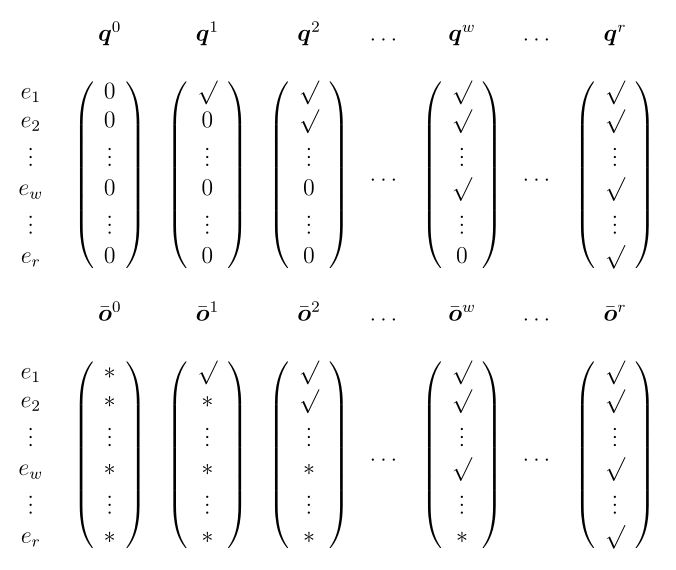}
    \caption{The constructing sequences of $\{\bq^j\}_{j=0}^{r}$ and $\{\bar{\bo}^j\}_{j=0}^{r}$.} \label{fig:Figure_1}
\end{figure}

By summing the first $w$-terms of inequalities (\ref{without-constraint-iteration}) and (\ref{without-constraint-iteration-non}), separatively, we naturally obtain
\bea
f(\bo) - f(\bq^w) &\le& f(\bar{\bo}^w) \label{without-constraint-result}\,\, (f \, {\rm is} \, {\rm monotone}),\\
f(\bo) - 2f(\bq^w) &\le& f(\bar{\bo}^w) \label{without-constraint-result-non}\,\, ( f \, {\rm is} \, \text{non-monotone}).
\eea

Suppose that $\bs$ is a feasible solution appearing in Algorithm~\ref{greedy} containing $\bq^w$ as $\bs^0$.
If we can guarantee the quality of this solution, then the returned solution of Algorithm~\ref{greedy} will not be worse.
We know that one item $e\in E$ should be found in Line 5 during each iteration.
If $e\notin P(\bs)\cup P(\bo)$, we can omit $e$ from $E$, which does not affect the quality of the solution of Algorithm~\ref{greedy}. Then each $e$ selected in Line 5 but not added to $P(\bs)$ (because of the budget) should belong to $P(\bo)$. Suppose $p+1$ is the first iteration in which the element $e^{p+1}\in P(\bo)$ is considered but not added to $P(\bs)$.

Let $\bs^0=\bq^w$ be the starting solution of Algorithm~\ref{greedy} and $\bo^0=\bar{\bo}^w \setminus \bI_{[e^{p+1}, i^{p+1}_*]}$, where $i^{p+1}_*$ is the index of $e^{p+1}$ in $\bar{\bo}^w$.
For $j=1,2,\ldots,p$, define
\bea
\bs^j&=&\bs^{j-1}\sqcup \bI_{[e^j, i^j]}, \nn\\
\bo^j&=& ( \bo^{j-1}\sqcup \bI_{[e^j, i^j]} ) \sqcup \bI_{[e^j, i^j]}. \nn
\eea
That is, $\bs^j$ is the partial greedy solution after the $j$-th iteration, and $\bo^j$ consists of the items in $P(\bo^0)\cup P(\bs^j)$, where the index of items in $P(\bs^j)$ aligns with $\bs^j$, and the index of other items aligns with $\bo^0$. By denoting
\[\bo^{j-\frac{1}{2}}=\bo^{j-1}\sqcup \bI_{[e^j, i^j]},
\]we have the following relationship:
\bea
\bs^{j-1}&\preceq&\bo^{j-\frac{1}{2}},\label{patial1}\\
\bs^{j-1}&\preceq&\bo^{j-1}\label{patial2}.
\eea
Furthermore, we can see the construction process of sequences $\{\bs^j\}_{j=0}^{p}$, $\{\bo^j\}_{j=0}^p$ and $\{\bo^{j-\frac{1}{2}}\}_{j=1}^{p}$ in Figure~\ref{fig:Figure_2}.

\begin{figure}[tb]
    \centering
    \includegraphics[width=13cm,height=11cm,angle=0]{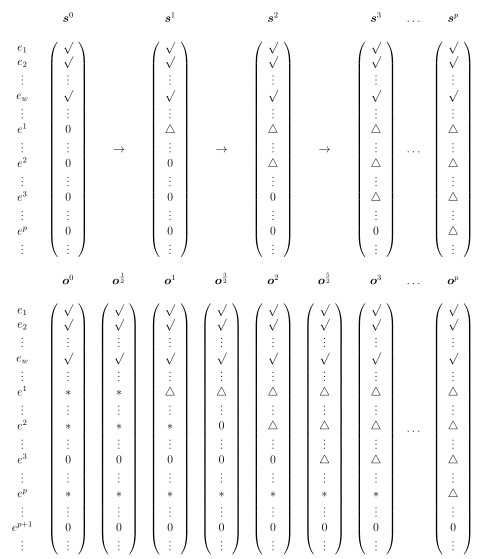}
    \caption{The constructing sequences of $\{\bs^j\}_{j=0}^{p}$, $\{\bo^j\}_{j=0}^p$ and $\{\bo^{j-\frac{1}{2}}\}_{j=1}^{p}$. The first three elements greedily selected by Algorithm~\ref{greedy} correspond to different cases, which are $\bo^0_{e^1} = \bs^1_{e^1} \ne 0$, $0 \ne \bo^0_{e^2} \ne \bs^2_{e^2} \ne 0$ and $\bo^0_{e^3} = 0 \ne \bs^3_{e^3}$. In particular, $\bo^0_{e^{p+1}} = 0$. }
    \label{fig:Figure_2}
\end{figure}

Moreover, for $t=0, \dots, p$, we also have
\bea\label{(supp-t)}
P(\bo^0)\setminus P(\bs^t) = P(\bo^t) \setminus P(\bs^t ).
\eea

Then we have the following lemma.
\begin{lemma}\label{lem123}
For the item $e^{p+1}\in P(\bo)\setminus P(\bs^{p})$ and $j\in [k]$, $\Delta_{e^{p+1}, j} f(\bo^0) \le \frac{1}{w}f(\bs^0)$.
\end{lemma}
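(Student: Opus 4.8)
The plan is to squeeze $\Delta_{e^{p+1}, j} f(\bo^0)$ between the greedy increment produced at the last step of the $\{\bq^t\}$-construction and the per-element average of the first $w$ increments. Concretely, I would establish the chain
\[
\Delta_{e^{p+1}, j} f(\bo^0) \;\le\; f(\bq^w)-f(\bq^{w-1}) \;\le\; \tfrac{1}{w}\,f(\bq^w)\;=\;\tfrac{1}{w}\,f(\bs^0),
\]
where the final equality is just $\bs^0=\bq^w$. The left inequality will follow from orthant submodularity, and the middle one from the fact that the greedy increments along $\{\bq^t\}$ are non-increasing, so the $w$-th of them cannot exceed their average.

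First I would handle the partial-order bookkeeping behind the left inequality. Since the greedy phase starts from $\bs^0=\bq^w$ and scans $E\setminus P(\bq^w)$, the element $e^{p+1}$ satisfies $e^{p+1}\in P(\bo)\setminus P(\bq^w)\subseteq P(\bo)\setminus P(\bq^{w-1})$; in particular $e^{p+1}$ is an admissible candidate at the $w$-th step of the $\{\bq^t\}$-construction, so $\Delta_{e^{p+1},j}f(\bq^{w-1})\le \max_{e,i}\Delta_{e,i}f(\bq^{w-1})=f(\bq^w)-f(\bq^{w-1})$ for every $j\in[k]$. Next, because $\bq^{w-1}\preceq\bq^w\preceq\bar{\bo}^w$ and $e^{p+1}\notin P(\bq^{w-1})$, deleting $\bI_{[e^{p+1},i^{p+1}_*]}$ from $\bar{\bo}^w$ to form $\bo^0$ preserves the containment, whence $\bq^{w-1}\preceq\bo^0$. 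As $e^{p+1}\notin P(\bo^0)$, orthant submodularity (Lemma~\ref{mon-ksub}) gives $\Delta_{e^{p+1},j}f(\bo^0)\le\Delta_{e^{p+1},j}f(\bq^{w-1})$, and combining the two displays yields the left inequality.

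For the middle inequality I would show that the increments $g_t:=f(\bq^t)-f(\bq^{t-1})$ are non-increasing in $t$. Fix any pair $(e,i)$ admissible at step $t+1$, i.e. $e\in P(\bo)\setminus P(\bq^t)$; then $e\notin P(\bq^{t-1})$ as well, so orthant submodularity gives $\Delta_{e,i}f(\bq^t)\le\Delta_{e,i}f(\bq^{t-1})\le g_t$, the last step using that $e$ is also admissible at step $t$. Taking the maximum over admissible $(e,i)$ yields $g_{t+1}\le g_t$, hence $g_w\le\frac1w\sum_{t=1}^w g_t=\frac1w\big(f(\bq^w)-f(\bq^0)\big)\le\frac1w f(\bq^w)$, using $f(\bq^0)=f(\varnothing)\ge0$. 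This is the crux, and the main obstacle is precisely recognizing that the greedy selection rule for $\{\bq^t\}$ forces monotone-decreasing increments through orthant submodularity alone (no global monotonicity of $f$ is required), while keeping the several $\preceq$-relations and the membership $e^{p+1}\in P(\bo)\setminus P(\bq^{w-1})$ straight so that the single chosen increment $g_w$ dominates the target marginal.
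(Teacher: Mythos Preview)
Your proof is correct and essentially the same as the paper's: both use orthant submodularity together with the greedy rule defining $\{\bq^t\}$ to bound $\Delta_{e^{p+1},j}f(\bo^0)$ by the increments $g_t=f(\bq^t)-f(\bq^{t-1})$, and then pass to the average $\tfrac1w\sum_{t=1}^w g_t\le\tfrac1w f(\bq^w)=\tfrac1w f(\bs^0)$. The only cosmetic difference is that the paper bounds the target marginal by \emph{every} $g_t$ directly (using $\bq^{t-1}\preceq\bo^0$ for all $t\le w$) and then invokes $\min_t g_t\le\text{average}$, whereas you bound it only by $g_w$ and separately argue $g_w=\min_t g_t$ via the non-increasing property of the greedy increments.
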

\begin{proof}
By the orthant submodularity of $f$, the order rule of $P(\bo)$, the fact that $f(\varnothing)\geq 0$ and $\bq^w=\bs^0$, we can give the proof as follows:
\bea
\Delta_{e^{p+1} , j} f( \bo^0 )
&\le& \min\{f( \bq^{w-1} \sqcup \bI_{[e^{p+1}} ,j] ) - f( \bq^{w-1} ), \dots, f(\bq^0 \sqcup \bI_{[e^{p+1} ,j]})-f( \bq^0 ) \} \nn\\
&\leq& \min\{f( \bq^w )-f(\bq^{w-1}), \dots, f( \bq^1 )-f(\bq^0 ) \} \nn\\
&\leq& \frac{f( \bq^w ) - f( \bq^{w-1} ) + \dots + f( \bq^1 ) - f(\bq^0)}{w} \leq \frac{1}{w} f( \bq^w ).\nn
\eea
\endzm
\end{proof}

By analyzing each iteration before the $p$-th iteration, we can get the following lemma.
\begin{lemma}\label{lem1}
For $t=0, \dots, p$,

{\rm (I)} if $f$ is monotone,
\[
f(\bo^0)\le 2f(\bs^t)-f(\bs^0)+\sum_{e\in P(\bo^0)\setminus P(\bs^t)}\Delta_{e, \bo^0_e}f(\bs^t);
\]

{\rm (II)} if $f$ is non-monotone,
\[
f(\bo^0)\le3f(\bs^t)-2f(\bs^0)+\sum_{e\in P(\bo^0)\setminus P(\bs^t)}\Delta_{e, \bo^0_e}f(\bs^t). \]
\end{lemma}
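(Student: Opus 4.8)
The plan is to combine two ingredients: a single application of Lemma~\ref{lem-ksub} that controls the gap $f(\bo^t)-f(\bs^t)$ by the residual marginal sum $A_t:=\sum_{e\in P(\bo^0)\setminus P(\bs^t)}\Delta_{e,\bo^0_e}f(\bs^t)$, and a ``per-iteration'' inequality comparing the decrease of $f$ along the auxiliary sequence $\{\bo^j\}$ with the greedy gain along $\{\bs^j\}$. First I would record that $\bs^t\preceq\bo^t$ for every $t$: this follows from $\bs^0\preceq\bo^0$ (which holds because $\bo^0=\bar{\bo}^w\setminus\bI_{[e^{p+1},i^{p+1}_*]}$ agrees with $\bs^0=\bq^w$ on $P(\bs^0)$, and $e^{p+1}\notin P(\bs^0)$ so deleting it is harmless) together with the fact that every step moves $e^j$ into coordinate $i^j$ in both sequences. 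Since \reff{(supp-t)} gives $P(\bo^t)\setminus P(\bs^t)=P(\bo^0)\setminus P(\bs^t)$ and $\bo^t_e=\bo^0_e$ on this set, Lemma~\ref{lem-ksub} applied to $\bs^t\preceq\bo^t$ yields $f(\bo^t)-f(\bs^t)\le A_t$. This step uses only $k$-submodularity and is valid in both cases.

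The core of the argument is the per-iteration estimate: for each $j\in\{1,\dots,t\}$,
\be
f(\bo^{j-1})-f(\bo^j)\le \Delta_{e^j,i^j}f(\bs^{j-1}) = f(\bs^j)-f(\bs^{j-1}) \quad(\text{monotone}),
\ee
and the same with a factor $2$ on the right when $f$ is non-monotone. I would prove this by the three-case analysis suggested by Figure~\ref{fig:Figure_2}, according to whether $\bo^0_{e^j}$ equals $i^j$, equals a different nonzero index $m$, or is $0$. In the already-aligned case the left side is $0$, and one only needs the greedy gain $\Delta_{e^j,i^j}f(\bs^{j-1})\ge0$ (immediate if $f$ is monotone, and otherwise from pairwise monotonicity, since $\max_i\Delta_{e,i}f\ge\tfrac12(\Delta_{e,i}f+\Delta_{e,i'}f)\ge0$). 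In the conflict case I would decompose the drop through $\bo^{j-\frac{1}{2}}$ as $\Delta_{e^j,m}f(\bo^{j-\frac{1}{2}})-\Delta_{e^j,i^j}f(\bo^{j-\frac{1}{2}})$ and bound it using orthant submodularity with \reff{patial1} (to pass from $\bo^{j-\frac{1}{2}}$ down to $\bs^{j-1}$) and the greedy maximality $\Delta_{e^j,m}f(\bs^{j-1})\le\Delta_{e^j,i^j}f(\bs^{j-1})$ (the costs cancel since the element is the same). In the monotone case the subtracted term $\Delta_{e^j,i^j}f(\bo^{j-\frac{1}{2}})\ge0$ is simply dropped; in the non-monotone case pairwise monotonicity replaces $-\Delta_{e^j,i^j}f(\bo^{j-\frac{1}{2}})$ by $+\Delta_{e^j,m}f(\bo^{j-\frac{1}{2}})$, which is exactly what produces the factor $2$. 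The new-element case ($\bo^0_{e^j}=0$) is handled the same way using \reff{patial2}.

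Finally I would telescope the per-iteration inequalities over $j=1,\dots,t$ to obtain $f(\bo^0)-f(\bo^t)\le f(\bs^t)-f(\bs^0)$ in the monotone case (resp.\ $\le 2(f(\bs^t)-f(\bs^0))$ in the non-monotone case), and add this to $f(\bo^t)\le f(\bs^t)+A_t$. This yields $f(\bo^0)\le 2f(\bs^t)-f(\bs^0)+A_t$ and $f(\bo^0)\le 3f(\bs^t)-2f(\bs^0)+A_t$, respectively, which are the two claimed bounds.

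The main obstacle is the conflict case of the per-iteration estimate: one must chain orthant submodularity, greedy density-maximality, and (for the factor $2$) pairwise monotonicity in the correct order and with the correct inequality directions, while carefully tracking which base point ($\bs^{j-1}$, $\bo^{j-1}$, or $\bo^{j-\frac{1}{2}}$) each marginal is evaluated at. Verifying that the three geometric cases of Figure~\ref{fig:Figure_2} exhaust the possibilities, and that $\bs^t\preceq\bo^t$ together with \reff{(supp-t)} indeed turns the Lemma~\ref{lem-ksub} sum into $A_t$, are the remaining bookkeeping points.
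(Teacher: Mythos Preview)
Your proposal is correct and follows essentially the same approach as the paper: establish the per-iteration inequality $f(\bo^{j-1})-f(\bo^j)\le c\,[f(\bs^j)-f(\bs^{j-1})]$ (with $c=1$ or $2$) by the three-case analysis on $\bo^{j-1}_{e^j}$, telescope it, and add the bound $f(\bo^t)\le f(\bs^t)+A_t$ coming from Lemma~\ref{lem-ksub} together with \reff{(supp-t)}. Your explicit decomposition $\Delta_{e^j,m}f(\bo^{j-\frac{1}{2}})-\Delta_{e^j,i^j}f(\bo^{j-\frac{1}{2}})$ in the conflict case is exactly the paper's identity $[f(\bo^{j-1})-f(\bo^{j-\frac{1}{2}})]-[f(\bo^{j})-f(\bo^{j-\frac{1}{2}})]$ rewritten, and the subsequent chain (drop or pairwise-monotonicity, then orthant submodularity via \reff{patial1}, then greedy maximality) matches the paper line for line.
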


\begin{proof}
The proof is trivial for $t=0$, so we only consider the cases $t=1,2,\ldots,p$.

(I) According to whether the element $e^j$ belongs to $P(\bo^{j-1})$ or not,
we first prove that
\be\label{main}
f(\bo^{j-1})-f(\bo^j) \le f(\bs^j)-f(\bs^{j-1}),  \,\,\forall j\in [p].
\ee

\textbf{Case I.1. If 
$e^j\in P(\bo^{j-1})$.} 

In this case, $\bo^{j-1}_{e^j}\ne0$. If $\bo^{j-1}_{e^j}=i^j$, then by the monotonicity of $f$, we have
$
f(\bo^{j-1})-f(\bo^j) = 0 \le f(\bs^j)-f(\bs^{j-1}).
$
Otherwise, we get
\bea
f(\bo^{j-1})-f(\bo^j) &= & f(\bo^{j-1})-f(\bo^{j-\frac{1}{2}})-(f(\bo^{j})-f(\bo^{j-\frac{1}{2}}))\nn \\
&\leq& f(\bo^{j-1})-f(\bo^{j-\frac{1}{2}}) \,\,({\rm monotonicity,\, since}\,\bo^{j-\frac{1}{2}}\preceq\bo^{j}) \nn \\
&= & \Delta_{e^j,\bo^{j-1}_{e^j}} f(\bo^{j-\frac{1}{2}})\,\,{\rm (definitions\, of\,}\bo^{j-\frac{1}{2}},\,\bo^{j-1})\label{case1-5} \\
&\leq & \Delta_{e^j,\bo^{j-1}_{e^j}} f(\bs^{j-1}) \,\,({\rm orthant\, submodularity\,\, and}\,\,(\ref{patial1})) \label{case1-6}\\
&\leq & f(\bs^j)-f(\bs^{j-1}) \,\,{\rm (Algorithm~\ref{greedy})}\label{case1-7}.
\eea

\textbf{Case I.2. If 
$e^j\notin P(\bo^{j-1})$.} 

In this case, $\bo^{j-1}_{e^j}=0$ and $\bo^{j}=\bo^{j-\frac{1}{2}}$. Then we have
\[
f(\bo^{j-1})-f(\bo^j)= -\Delta_{e^j, i^j} f(\bo^{j-1}) \le 0\le f(\bs^j)-f(\bs^{j-1}).
\]

Thus, we can finish the proof of (\ref{main}). For $t=1, \dots, p$, by summing the first $t$-terms of these inequalities, we have
\begin{equation}\label{(2)}
f(\bo^0)-f(\bo^t) \le f(\bs^t)-f(\bs^0).
\end{equation}
By Lemma~\ref{lem-ksub} and $\bs^t\preceq \bo^t$, we can obtain
\begin{equation}\label{(3)}
f(\bo^t)\le f(\bs^t)+\sum_{e\in P(\bo^t)\setminus P(\bs^t)}\Delta_{e, \bo^t_e}f(\bs^t), t=1, \dots, p.
\end{equation}
Thus, we can finish the proof of (I) by combining (\ref{(supp-t)}), (\ref{(2)}) and (\ref{(3)}).

(II) In this part, assume that $f$ is non-monotone. Following the proof similar to that of (I), for $j\in [p]$, if the following inequality is correct
\begin{equation}\label{supp}
f(\bo^{j-1})-f(\bo^j) \le  2[f(\bs^j)-f(\bs^{j-1})],
\end{equation}
then by accumulating $j$ from $1$ to $t$, we can get
\[
f(\bo^0)-f(\bo^t)\le2[f(\bs^t)-f(\bs^0)], \, t=1, \dots, p.
\]
Therefore, by Lemma~\ref{lem-ksub} and~(\ref{(supp-t)}), we finish our proof.

\textbf{Case II.1. If 
$e^j\in P(\bo^{j-1})$.} 

When $\bo^{j-1}_{e^j}=i^j$, the left hand side of Inequality~(\ref{supp}) is $0$. Then we will explain that its right hand side is non-negative. By Lemma~\ref{mon-ksub}, we know that $f$ is pairwise monotone,
then there is at most one position $i\in [k]$ satisfying $\Delta_{e^j,i}f(\bs^{j-1})<0$. In fact, if this case occurs, it must be the position with the smallest marginal benefit.
Thus, based on the greedy technique used in Step $5$ of Algorithm~\ref{greedy}, even if $f$ is non-monotone, the right hand side of Inequality~(\ref{supp}) could not be negative.

Otherwise ($\bo^{j-1}_{e^j}\neq i^j$), also from the pairwise monotonicity of $f$,
we have
\bea
f(\bo^{j-1})-f(\bo^j)& = & 2[f(\bo^{j-1})-f(\bo^{j-\frac{1}{2}})]-[f(\bo^{j-1})+f(\bo^j)-2f(\bo^{j-\frac{1}{2}})]\nn\\
& \le & 2[f(\bo^{j-1})-f(\bo^{j-\frac{1}{2}})].\nn
\eea
Now Inequality~(\ref{supp}) can be obtained by using the same arguments as those in (\ref{case1-5})-(\ref{case1-7}).

\textbf{Case II.2. If 
$e^j\notin P(\bo^{j-1})$.} 

In this case, $\bo^{j-1}_{e^j}=0$ and $\bo^{j}=\bo^{j-\frac{1}{2}}$. Then we have
\bea
f(\bo^{j-1})-f(\bo^j) &=&-\Delta_{e^j,i^j} f(\bo^{j-1})\nn\\
&\leq& \Delta_{e^j,i'} f(\bo^{j-1}),\,\,\forall i'\in[k]\setminus\{i^j\} \,\,{\rm (pairwise\, monotone)}\nn \\
&\leq& \Delta_{e^j,i'} f(\bs^{j-1})\,\,({\rm orthant\,\, submodular\,\, and}\,\,(\ref{patial2}))\nn\\
& \leq & \Delta_{e^j,i^j} f(\bs^{j-1})\,\,({\rm Algorithm}~\ref{greedy})\nn\\
& = & f(\bs^j)-f(\bs^{j-1}).\nn
\eea
From the explanation on the application of the pairwise monotonicity of a $k$-submodular function and the greedy technique used in Step $5$ of  Algorithm~\ref{greedy}, we show that $f(\bs^j)-f(\bs^{j-1})\geq 0$. Therefore, we know that Inequality~(\ref{supp}) holds.
\endzm
\end{proof}
\begin{theorem}\label{the1}
If $f$ is monotone, then Algorithm \ref{greedy} returns a $\frac{1}{2}(1-e^{-2})$ approximation solution with query complexity $O(n^6 k^5)$.
\end{theorem}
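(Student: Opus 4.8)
The plan is to lower-bound the value of the feasible greedy solution produced in the enumeration branch whose seed is $\bs^0=\bq^w$, and to show it already attains the ratio; since Algorithm~\ref{greedy} returns the best solution it ever inspects (and $\bq^w\subseteq P(\bo)$ is feasible, hence one of the enumerated size-$w$ seeds), this suffices. I work with the auxiliary tuple $\bo^0$ and the partial greedy solutions $\bs^0\preceq\bs^1\preceq\cdots\preceq\bs^p$, stopping just before the first iteration $p+1$ in which an element $e^{p+1}\in P(\bo)$ is rejected for budget reasons. The target is an inequality of the form $f(\bs^p)\ge\frac12(1-e^{-2})f(\bo^0)+\gamma\,f(\bs^0)$ with $\gamma\ge 0$, after which I convert $f(\bo^0)$ back to $f(\bo)$.

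First I would turn Lemma~\ref{lem1}(I) into a per-iteration recursion. Writing $G_t=f(\bs^t)-f(\bs^0)$, $\rho_s=f(\bs^s)-f(\bs^{s-1})$ and $D=f(\bo^0)-f(\bs^0)$, Lemma~\ref{lem1}(I) at $t=s-1$ gives $D\le 2G_{s-1}+\sum_{e\in P(\bo^0)\setminus P(\bs^{s-1})}\Delta_{e,\bo^0_e}f(\bs^{s-1})$. Because $p+1$ is the \emph{first} iteration rejecting an optimal element, every such $e$ is still available at iteration $s\le p$, so the greedy density rule of Line~5 yields $\Delta_{e,\bo^0_e}f(\bs^{s-1})\le \frac{c_e}{c_{e^s}}\rho_s$. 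Summing, and using $P(\bq^w)\subseteq P(\bo^0)\cap P(\bs^{s-1})$ together with $e^{p+1}\notin P(\bo^0)$, gives the cost estimate $\sum_{e\in P(\bo^0)\setminus P(\bs^{s-1})}c_e\le \tilde L$, where $\tilde L:=L-c(P(\bq^w))-c_{e^{p+1}}$. Hence $D\le 2G_{s-1}+\frac{\tilde L}{c_{e^s}}\rho_s$ for every $s=1,\dots,p$.

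The technical heart — and the step I expect to be hardest — is to aggregate these $p$ inequalities into a single $e^{-2}$ bound. Rearranged, they read $\rho_s\ge\frac{c_{e^s}}{\tilde L}(D-2G_{s-1})$, which after unfolding each element $e^s$ into $c_{e^s}$ unit-cost increments of equal gain $\rho_s/c_{e^s}$ matches the hypothesis of Lemma~\ref{lemA111} (equivalently it gives the recursion $D-2G_s\le (D-2G_{s-1})(1-2c_{e^s}/\tilde L)$, controlled by $\prod_s(1-x_s)\le e^{-\sum_s x_s}$). The budget-overflow condition $c(P(\bs^p))+c_{e^{p+1}}>L$ rewrites as $\sum_{s=1}^p c_{e^s}>\tilde L$, so the accumulated exponent exceeds $2$ and I obtain $D-2G_p< e^{-2}D$, i.e. $f(\bs^p)>\frac12(1-e^{-2})f(\bo^0)+\frac12(1+e^{-2})f(\bs^0)$. (If some $D-2G_{s-1}\le 0$ the bound is only easier, since then $f(\bs^{s-1})\ge\frac12 f(\bo^0)$ and monotonicity gives $f(\bs^p)\ge f(\bs^{s-1})$; the delicate point here is precisely that the cost estimate must use $\tilde L$, not $L$, so that the overflow sum matches the exponent's denominator.)

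Finally I convert $f(\bo^0)$ to the true optimum and pin down $w$. Since $\bar{\bo}^w=\bo^0\sqcup\bI_{[e^{p+1},i^{p+1}_*]}$, Lemma~\ref{lem123} gives $f(\bar{\bo}^w)-f(\bo^0)\le\frac1w f(\bs^0)$, and with inequality (\ref{without-constraint-result}), $f(\bo)\le f(\bar{\bo}^w)+f(\bq^w)\le f(\bo^0)+(1+\frac1w)f(\bs^0)$, so $f(\bo^0)\ge f(\bo)-(1+\frac1w)f(\bs^0)$. Substituting leaves $f(\bs^p)>\frac12(1-e^{-2})f(\bo)+\big(e^{-2}-\frac{1}{2w}(1-e^{-2})\big)f(\bs^0)$; the coefficient of $f(\bs^0)$ is nonnegative exactly when $2w\ge e^2-1\approx 6.39$, which first holds at $w=4$ (and fails at $w=3$), explaining the choice in Algorithm~\ref{greedy}. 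Using $f(\bs^0)\ge0$ then yields $f(\bs^p)\ge\frac12(1-e^{-2})f(\bo)$, and the returned solution is no worse. For the complexity, Line~1 and the loop enumerate $O(n^{w}k^{w})=O(n^4k^4)$ labelled seeds, and each greedy run performs $O(n)$ iterations, each scanning $O(nk)$ pairs in Line~5, i.e. $O(n^2k)$ oracle calls; the product is $O(n^6k^5)$.
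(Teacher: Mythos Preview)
Your proof is correct and follows essentially the same route as the paper: you invoke Lemma~\ref{lem1}(I), bound the residual marginals via the greedy density rule, aggregate with a Wolsey-type inequality (the paper's Lemma~\ref{lemA111}), and then convert $f(\bo^0)$ back to $f(\bo)$ using Lemma~\ref{lem123} and \eqref{without-constraint-result} to pin down $w=4$. The only cosmetic differences are that the paper packages $D$ and $G_t$ through the shifted function $g(\bx)=f(\bx\sqcup\bs^0)-f(\bs^0)$ and uses $L''=c(P(\bo^0)\setminus P(\bs^0))$ in place of your (slightly larger) $\tilde L$, then finishes with $f(\bs^p)\ge\min\{\tfrac{w}{2w+1},\tfrac12(1-e^{-2})\}f(\bo)$ instead of carrying the explicit $f(\bs^0)$ coefficient; both reductions yield the identical threshold $2w\ge e^2-1$.
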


\begin{proof}
First, the query complexity is $O(n^{6}k^{5})$ because we need to enumerate four elements in any position and loop $n$ times, where we calculate the marginal density by querying $n k$ times the values of $f$.

Next we analyze the approximation ratio.
For $j\in [p+1]$, by defining $\theta_j=\max_{e\in E^{j-1}, i\in[k]}\frac{\Delta_{e, i}f(\bs^{j-1})}{c_e}$, we have $\Delta_{e, \bo^0_e} f(\bs^{j-1}) \le c_e\theta_{j}$ for each $e\in P(\bo^0)\setminus P(\bs^{j-1})$.
Then for $t=0, \dots, p$, $\sum_{e\in P(\bo^0)\setminus P(\bs^t)} \Delta_{e, \bo^0_e}f(\bs^t) \leq (c(P(\bo^0) \setminus P(\bs^0))) \theta_{t+1}$, since $c(P(\bo^0)\setminus P(\bs^t))\le c(P(\bo^0) \setminus P(\bs^0))$.
Now define a new function $g(\bx)=f(\bx \sqcup \bs^0)-f(\bs^0)$ for all $\bx \in (k+1)^{V\setminus P(\bs^0)}$, which satisfies $k$-submodularity.
Together with Lemma \ref{lem1} (I), we get
\bea\label{(5)}
g(\bo^0 \setminus \bs^0)
&\leq& 2g(\bs^t \setminus \bs^0)+\sum_{e\in P(\bo^0)\setminus P(\bs^t)}\Delta_{e, \bo^0_e}f(\bs^t) \nn\\
&\leq& 2\big[g(\bs^t \setminus \bs^0)+\frac{c(P(\bo^0) \setminus P(\bs^0))}{2}\theta_{t+1}\big].
\eea


Let $L_t=\sum_{j=1}^t c_{e^j}$ and $L_0=0$.
Define $L' =  L_p = c(P(\bs^p) \setminus P(\bs^0))$ and $L'' = c(P(\bo^0) \setminus P(\bs^0))$.
It is easy to see that $L' \ge L''$, as otherwise we have $L' + c(e^{p+1}) < L'' + c(e^{p+1}) \le L - c(P(\bs^0))$, giving a contradiction to the definition of $p+1$.
For $l=1, \dots, L_p$, we define new variables $\rho_l=\theta_{t}$, if $l=L_{t-1}+1, \dots, L_t$, i.e., $\rho_1 = \cdots = \rho_{L_1} = \theta_1, \rho_{L_1+1} = \cdots = \rho_{L_2} = \theta_2, \ldots, \rho_{L_{p-1} + 1= \cdots = \rho_{L_p} = \theta_p}$. Then by collecting items with the same variables, we can obtain
\be\label{min}
\min_{s=1, \dots, L_p } \{ \sum_{l=1}^{s-1} \rho_l + \frac{L''} {2} \rho_s \} = \min_{t=0, \dots, p-1} \{ \sum_{l=1}^{L_t} \rho_l + \frac{L''}{2} \rho_{L_t+1}\}.
\ee

Furthermore, from this definition, we also have
$g(\bs^t)=\sum_{\tau=1}^tc_{e^{\tau}}\theta_{\tau}=\sum_{l=1}^{L_t}\rho_l$ for $t=1, \dots, p$.
Thus, we get
\[
\min_{t=0, \dots, p-1}\{g(\bs^t \setminus \bs^0)+\frac{L''}{2} \theta_{t+1}\}=\min_{t=0, \dots, p-1 }\{\sum_{l=1}^{L_t}\rho_l+\frac{L''}{2}\rho_{L_t+1}\}.
\]
Then together with~(\ref{min}), (\ref{(5)}) can be improved as follows.
\bea\label{(6)}
g(\bo^0 \setminus \bs^0)
&\le&  2\min_{t=0, \dots, p-1}\{g(\bs^t \setminus \bs^0)+\frac{L''}{2}\theta_{t+1}\} \nn\\
&=&  2\min_{s=1, \dots, L_p}\{\sum_{l=1}^{s-1}\rho_l+\frac{ L'' }{2}\rho_s\}.
\eea
By Lemma \ref{lemA111} and (\ref{(6)}), we obtain
\begin{align}\label{(7)}
\frac{g(\bs^{p} \setminus \bs^0) } {g(\bo^0\setminus \bs^0)} \notag
= \frac{\sum_{l=1}^{L_p }\rho_l}{g(\bo^0\setminus \bs^0)}
\ge{} & \frac{\sum_{l=1}^{L_p} \rho_l}{2\min_{s=1, \dots, L_p} \{\sum_{l=1}^{s-1}\rho_l+\frac{L'' }{2}\rho_s\}} \\
\ge{} & \frac{1}{2}(1-e^{-2L' / L''}) \geq \frac{1}{2}(1-e^{-2}).
\end{align}

Moreover, by (\ref{without-constraint-result}) and Lemma~\ref{lem123}, we have
\bea
\frac{w+1}{w} f(\bs^0) \notag
&=& f(\bs^0) + \frac{1}{w} f(\bs^0) \nn\\
&\ge& f(\bs^0) +  \Delta_{e^{p+1}, i^{p+1}_*} f(\bo^0) \nn\\
&=& f(\bar{\bo}^w) - [f(\bo^0) - f(\bs^0)] \nn\\
&\ge& f(\bo) - f(\bs^0) - g(\bo^0 \setminus \bs^0), \nn
\eea
which implies that
\be\label{s0-monotone}
\frac{2w+1}{w} f(\bs^0) \ge f(\bo) - g(\bo^0 \setminus \bs^0).
\ee

Finally, combining (\ref{(7)}) and (\ref{s0-monotone}), we obtain a lower bound on the output $f(\bs)$ of out algorithm:
\bea
f(\bs) \ge f(\bs^p) &=& f(\bs^0) + g(\bs^p \setminus \bs^0) \nn\\
&\ge& f(\bs^0) + \frac{1}{2} (1-e^{-2}) g(\bo^0 \setminus \bs^0) \nn\\
&\ge& \frac{w}{2w+1} ( f(\bo) - g(\bo^0 \setminus \bs^0) ) + \frac{1}{2} (1-e^{-2}) g(\bo^0 \setminus \bs^0) \nn\\
&\ge& \min\{ \frac{w}{2w+1}, \frac{1}{2} (1-e^{-2}) \} \cdot f(\bo). \nn
\eea
When $w\ge 4$, $\frac{w}{2w+1} \ge \frac{1}{2} (1-e^{-2})$, and thus $f(\bs) \ge \frac{1}{2} (1-e^{-2}) f(\bo)$.

\endzm
\end{proof}


According to the similar process as that in Theorem~\ref{the1}, combined with inequalities (\ref{without-constraint-result-non}) and Lemma~\ref{lem123}, we present the  following result for non-monotone case.
\begin{theorem}\label{non-montone}
If $f$ is non-monotone, then Algorithm \ref{greedy} has an approximation ratio of $\frac{1}{3}(1-e^{-3})$ with query complexity $O(n^{9}k^{8})$.
\end{theorem}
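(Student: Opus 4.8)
The plan is to run the entire argument of Theorem~\ref{the1} again, swapping the monotone ingredients for their non-monotone analogues: part (II) of Lemma~\ref{lem1} in place of part (I), and inequality~(\ref{without-constraint-result-non}) in place of its monotone version, then tracking how the constant $2$ propagates into $3$. For the query complexity, with $w=7$ the outer enumeration of a size-$w$ support, each element assigned to one of $k$ subsets, costs $O(n^{7}k^{7})$; for each fixed $\bs^0$ the while loop runs at most $n$ times and every iteration computes the marginal density with $O(nk)$ queries, contributing $O(n^{2}k)$. Multiplying gives $O(n^{9}k^{8})$.

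For the ratio I would keep the same $\theta_j$, the same $\rho_l$, and the auxiliary function $g(\bx)=f(\bx\sqcup\bs^0)-f(\bs^0)$. Feeding Lemma~\ref{lem1}(II) into the density bookkeeping yields $g(\bo^0\setminus\bs^0)\le 3\big[g(\bs^t\setminus\bs^0)+\tfrac{L''}{3}\theta_{t+1}\big]$, i.e.\ the leading factor is now $3$ and the weight on $\theta_{t+1}$ is $L''/3$ rather than $L''/2$. After the identical $\rho$-collection and the $\min$-identity~(\ref{min}), applying Lemma~\ref{lemA111} with $A=L'$ and $B=L''/3$ gives
\[
\frac{g(\bs^{p}\setminus\bs^0)}{g(\bo^0\setminus\bs^0)}\ \ge\ \frac{1}{3}\big(1-e^{-3L'/L''}\big)\ \ge\ \frac{1}{3}\big(1-e^{-3}\big),
\]
where the last step uses $L'\ge L''$.

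Next I would bound $f(\bs^0)$ using (\ref{without-constraint-result-non}) and Lemma~\ref{lem123}. Since $\bar{\bo}^w=\bo^0\sqcup\bI_{[e^{p+1},i^{p+1}_*]}$ and $\bq^w=\bs^0$, the chain $\tfrac{w+1}{w}f(\bs^0)\ge f(\bs^0)+\Delta_{e^{p+1},i^{p+1}_*}f(\bo^0)=f(\bar{\bo}^w)-g(\bo^0\setminus\bs^0)\ge f(\bo)-2f(\bs^0)-g(\bo^0\setminus\bs^0)$ produces $\tfrac{3w+1}{w}f(\bs^0)\ge f(\bo)-g(\bo^0\setminus\bs^0)$; the extra $2f(\bs^0)$ on the right (versus $f(\bs^0)$ in the monotone case) is exactly what turns $\tfrac{2w+1}{w}$ into $\tfrac{3w+1}{w}$. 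Combining the two displays as in Theorem~\ref{the1},
\[
f(\bs)\ge f(\bs^{p})\ge \frac{w}{3w+1}\big(f(\bo)-g(\bo^0\setminus\bs^0)\big)+\frac{1}{3}(1-e^{-3})\,g(\bo^0\setminus\bs^0)\ge \min\Big\{\tfrac{w}{3w+1},\tfrac{1}{3}(1-e^{-3})\Big\}f(\bo),
\]
and I would close by checking that $w=7$ is the least integer with $\tfrac{w}{3w+1}\ge\tfrac{1}{3}(1-e^{-3})$ (indeed $\tfrac{7}{22}\approx0.3182\ge0.3167$, while $\tfrac{6}{19}\approx0.3158$ fails), which both explains the input $w=7$ and delivers the claimed ratio $\tfrac{1}{3}(1-e^{-3})$.

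The computations are routine once the substitutions are made; the points needing care are (i) that the hypotheses of Lemma~\ref{lemA111} still hold, namely each $\theta_t\ge 0$, which in the non-monotone regime is not automatic but follows from pairwise monotonicity together with the greedy rule of Line~5, exactly as argued inside the proof of Lemma~\ref{lem1}(II); and (ii) the final $\min$-combination, which uses the elementary bound $ax+by\ge\min\{a,b\}(x+y)$ with $x=f(\bo)-g(\bo^0\setminus\bs^0)$ and $y=g(\bo^0\setminus\bs^0)$, so one must confirm $0\le g(\bo^0\setminus\bs^0)\le f(\bo)$ as in the monotone proof. I expect (i) to be the only genuine obstacle, since it is the one place where non-monotonicity could a priori break the density-based telescoping.
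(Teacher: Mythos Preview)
Your proposal is correct and follows exactly the approach the paper intends: the paper's own ``proof'' of Theorem~\ref{non-montone} is simply the one-line remark that the argument of Theorem~\ref{the1} goes through with inequality~(\ref{without-constraint-result-non}) and Lemma~\ref{lem1}(II) in place of their monotone counterparts, and you have accurately reconstructed every substitution (the factor $3$, the bound $\tfrac{3w+1}{w}f(\bs^0)\ge f(\bo)-g(\bo^0\setminus\bs^0)$, the choice $w=7$, and the query count). Your flagged care points (i) and (ii) are well taken and are handled, respectively, by the pairwise-monotonicity argument already inside Lemma~\ref{lem1}(II) and by a trivial case split on the sign of $f(\bo)-g(\bo^0\setminus\bs^0)$.
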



\begin{thebibliography}{00}











\bibitem{EN2022}
A. Ene, H. Nguyen.
Streaming algorithm for monotone $k$-submodular maximization with cardinality constraints.
In: Proceedings of ICML, 2022, pp. 5944-5967.

\bibitem{AV}
A. Huber, V. Kolmogorov.
Towards minimizing $k$-submodular functions.
In: Proceedings of ISCO, 2012, pp. 451-462.

\bibitem{ITY}
S. Iwata, S. Tanigawa, Y. Yoshida.
Improved approximation algorithms for $k$-submodular function maximization.
In: Proceedings of SODA, 2016, pp. 404-413.



\bibitem{TN}
T. Matsuoka, N. Ohsaka.
Maximization of monotone $k$-submodular functions with bounded curvature and non-$k$-submodular functions.
In: Proceedings of ACML, 2021, pp. 1707-1722.


\bibitem{NT}
L. Nguyen, M. Thai.
Streaming $k$-submodular maximization under noise subject to size constraint.
In: Proceedings of ICML, 2020, pp. 7338-7347.


\bibitem{NY}
N. Ohsaka, Y. Yoshida.
Monotone $k$-submodular function maximization with size constraints.
In: Proceedings of NeurIPS, 2015, pp. 694-702.

\bibitem{H}
H. Oshima.
Improved randomized algorithm for $k$-submodular function maximization.
SIAM J. Discrete Math., 2021, 35(1): 1-22.

\bibitem{CQDe}
C. Pham, Q. Vu, D. Ha, T. Nguyen, N. Le.
Maximizing $k$-submodular functions under budget constraint: applications and streaming algorithms.
J. Comb. Optim., 2022, 44: 723-751.

\bibitem{RY2020}
A. Rafiey, Y. Yoshida.
Fast and private submodular and $k$-submodular functions maximization with matroid constraints.
In: Proceeding of ICML, 2020, pp. 7887-7897.

\bibitem{S}
S. Sakaue.
On maximizing a monotone $k$-submodular function subject to a matroid constraint.
Discrete Optim., 2017, 23: 105-113.

\bibitem{SGW}
G. Shi, S. Gu, W. Wu.
$k$-submodular maximization with two kinds of constraints.
Discrete Mathematics, Algorithms and Applications, 2021, 13(4): 2150036.

\bibitem{SLL2022}
Y. Sun, Y. Liu, M, Li.
Maximization of $k$-submodular function with a matroid constraint.
In: Proceeding of TAMC, 2023, pp. 1-10.

\bibitem{M}
M. Sviridenko.
A note on maximizing a submodular set function subject to a knapsack constraint.
Oper. Res. Lett., 2004, 32(1): 41-43.


\bibitem{TWC}
Z. Tang, C. Wang, H. Chan.
On maximizing a monotone $k$-submodular function under a knapsack constraint.
Oper. Res. Lett., 2022, 50(1): 28-31.
%


\bibitem{WZ2017}
B. Wang, H. Zhou.
Multilinear extension of $k$-submodular functions.
arXiv:2107.07103, 2021.


\bibitem{JS}
J. Ward, S. ${\rm\check{Z}ivn\acute{y}}$.
Maximizing $k$-submodular functions and beyond.
ACM T Algorithms, 2016, article 47: 1-26.

\bibitem{L}
L. Wolsey.
Maximising real-valued submodular functions: Primal and dual heuristics for location problems.
Math. Oper. Res., 1982, 7(3): 410-425.

\bibitem{XLZL}
H. Xiao, Q. Liu, Y. Zhou, M. Li.
Approximation algorithms for $k$-submodular
maximization subject to a knapsack constraint.
arXiv:2306.14520, 2023.

\bibitem{XLZL2022}
H. Xiao, Q. Liu, Y. Zhou, M. Li.
Non-monotone $k$-submodular function maximization with individual size constraints.
In: Proceeding of CSoNet, 2022, pp. 268-279.

\bibitem{YLZL2022}
K. Yu, M. Li, Y. Zhou, Q. Liu.
Guarantees for maximization of $k$-submodular functions with a knapsack and a matroid constraint.
In: Proceeding of AAIM, 2022, pp. 156-167.

\bibitem{YLZL2023a}
K. Yu, M. Li, Y. Zhou, Q. Liu.
On maximizing monotone or non-monotone $k$-submodular functions with the intersection of knapsack and matroid constraints.
J. Comb. Optim., 2023, 45(93).

\bibitem{YLZL2023b}
Q. Liu, K. Yu, M. Li, Y. Zhou.
$k$-submodular maximization with a knapsack constraint and $p$ matroid constraints.
Tsinghua Sci. Technol., 2023, 28(5): 896-905.

\bibitem{ZCLL}
L. Zheng, H. Chan, G. Loukides, M. Li.
Maximizing approximately $k$-submodular functions.
In: Proceeding of SDM, 2021, pp. 414-422.



\end{thebibliography}
\end{document}